\date{}
\newcommand{\ZZ}{\mathbb{Z}}
\newcommand{\NN}{\mathbb{N}}
\newcommand{\QQ}{\mathbb{Q}}
\newcommand{\Sl}{\mathcal{S}}
\colorlet{myyellow}{yellow}
\colorlet{myblue}{blue!40}
\colorlet{mygray1}{black!25}
\colorlet{mygray2}{black!65}
\colorlet{colorleft}{red}
\colorlet{colorright}{yellow}
\begin{document}
\title{Slopes of Tilings}

\author
{E. Jeandel}{Emmanuel Jeandel}
\author
{P. Vanier}{Pascal Vanier}
\address
{Laboratoire d'Informatique Fondamentale de Marseille\\
CMI - 39 rue Joliot-Curie\\
F-13453 Marseille Cedex 13, France}
\email[E. Jeandel]{emmanuel.jeandel@lif.univ-mrs.fr}
\email[P. Vanier]{pascal.vanier@lif.univ-mrs.fr}
\thanks{Both authors are party supported by ANR-09-BLAN-0164.}
\begin{abstract}\noindent
We study here slopes of periodicity of tilings.
A tiling is of slope $\theta$ if it is periodic along direction
$\theta$ but has no other direction of periodicity.

We characterize in this paper the set of slopes we can achieve with tilings, and prove they coincide with recursively enumerable sets of rationals.

\end{abstract}
\keywords{tilings, formal languages, foundations of computing.}
\maketitle

\section{Introduction}\label{S:I}

The model of tilings was introduced by Wang \cite{WangII} to study fragments of
the first order theory. This model is described by geometrical local
properties, deciding whether a given tile can be placed on a given
cell based only on its surrounding neighbours.

While the definition of tilings is deceptively simple, they exhibit
complex behaviours. As an example, the most basic problem (decide if a
given tiling system can tile the plane) is undecidable \cite{Berger2}.
This is due to both a straigthforward encoding of Turing machines in
tilings \cite{buchi,Chaitin08,Emde}
and to the existence of so-called \emph{aperiodic} tiling systems
\cite{Robinson,KariR}, that can tile the plane but in no periodic way.

In this paper we explore the periodic behaviour of tiling systems.
Periodic tilings have nice closure properties, in the sense that the image
of a periodic point by a shift-preserving morphism (i.e. a block map) is
again a periodic point. As a consequence, understanding the structure of the 
periodic points of a tiling system is a first step to decide when some tiling
system embeds in another, or when two tilings systems are ``isomorphic''
(more accurately conjugate \cite{LindMarcus})

In dimension one, the question boils down to determine for a tiling system
$\tau$ the set of integers $n$ so that there is a valid tiling by $\tau$ of
period (exactly) $n$. This question was answered succesfully: Using automata
theory, a complete characterization of the set of integers we can obtain
this way was obtained\cite{LindMarcus}.

The question is more delicate in two dimensions. We might break it 
down in two parts: Given a tiling system $\tau$,
\begin{itemize}
	\item For which $n$ is there a tiling of horizontal and vertical
	  period $n$ ?
    \item For which direction $\theta$ is there a tiling which is periodic only
	  along direction~$\theta$~?
\end{itemize}
The authors gave an answer to the first question in \cite{PVEJDLT}: Sets of
integers we can obtain correspond to the complexity class \textbf{NE}.
We deal in this paper with the second question, characterizing the set of
\emph{slopes} we can obtain by tiling systems.

While the answer in dimension one involves finite automata theory, it turns
out that the good tool to solve the problem in higher dimensions is
computability theory. The undecidability of the domino problem (deciding if a
tiling system tiles the plane) is indeed not an anomaly: many combinatorial
aspects of tilings can only be fully comprehended by means of recursivity
theory arguments \cite{HochMey,AubrunS09,Meyero}.

Along these lines, we will prove here the following theorem:

\begin{theorem}
 The sets of slopes of tilings are exactly the recursively enumerable sets of rationals.
\end{theorem}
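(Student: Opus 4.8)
The plan is to prove the two inclusions: that the set of slopes of any tile set is recursively enumerable, and conversely that every r.e.\ set of rationals is the set of slopes of some tile set.

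For the first inclusion, fix a tile set $\tau$ on an alphabet $A$ and a rational $\theta=p/q$ in lowest terms with $q\ge1$. For each $d\ge1$, a configuration of $\tau$ periodic along the vector $(dq,dp)$ is the same thing as a legal labelling of the cylinder $\ZZ^2/\ZZ(dq,dp)\cong(\ZZ/d\ZZ)\times\ZZ$; reading such labellings along a unimodular basis extending $(q,p)$, they form a one-dimensional subshift of finite type $X_d$ over the finite alphabet $A^d$, effectively computable from $\tau,p,q,d$ (the adjacency rules of $\tau$ become nearest-neighbour constraints once the sliding window is taken wider than $\max(|p|,q)$). Using that a doubly periodic configuration has a finite-index stabiliser, one checks that a configuration lying in some $X_d$ is doubly periodic precisely when its image in $X_d$ is $\ZZ$-periodic, so that $\theta$ is a slope of $\tau$ exactly when some $X_d$ contains a non-$\ZZ$-periodic point, i.e.\ exactly when some $X_d$ is infinite. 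Infiniteness of a one-dimensional SFT is decidable from its defining graph, uniformly in $(p,q,d)$, so the set of slopes of $\tau$ is recursively enumerable; note the search over the multiplier $d$ cannot be effectively bounded, which is precisely why decidability is lost.

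For the converse, let $S\subseteq\QQ$ be r.e., witnessed by a Turing machine $M$ halting on $p/q$ if and only if $p/q\in S$, and build $\tau_S$ from superimposed layers. A self-simulating hierarchical layer — in the spirit of the aperiodic tilings of Robinson or Kari, or of fixed-point tile sets — forces every valid configuration to carry a grid of nested macro-tiles of geometrically growing sizes; in a configuration periodic along $(dq,dp)$ this grid rolls up on the cylinder and is therefore truncated at scale $\Theta(d)$, its top macro-tiles "seeing" the direction $(q,p)$. On each top macro-tile a working layer recovers $p/q$ from the local combinatorics and simulates $M$ on input $p/q$ within the budget $\Theta(d)$ of time and space available, this simulation being \emph{required} to terminate; when it does, a free track is opened along the cylinder axis. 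If $p/q\in S$ then $M(p/q)$ halts, so for all sufficiently large $d$ the budget suffices, the top macro-tiles close up, and filling the free track with an aperiodic sequence produces a configuration periodic along $(q,p)$ only — that is, of slope exactly $\theta$; hence some $X_d$ is infinite and, by the first part, $\theta$ is a slope of $\tau_S$. If $p/q\notin S$ then $M(p/q)$ never halts, no top macro-tile can close up for any $d$, no $X_d$ contains a configuration of slope $\theta$, every $X_d$ is finite, and $\theta$ is not a slope of $\tau_S$; since a configuration carries a rational slope only if it is singly periodic along a non-vertical direction, this shows the set of slopes of $\tau_S$ is exactly $S$.

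The crux is the second direction and specifically the engineering of these layers so that the forcing is \emph{uniform in the direction}: every singly periodic configuration, whatever its slope, must be compelled to host the hierarchical grid and to run correctly the simulation of $M$ on its own slope; the simulation's failure to complete must entail the total absence of a configuration of that slope; and all of this must leave intact the freedom needed to realise the slopes in $S$ while creating no spurious slope (the vertical direction, carrying no rational slope, being handled separately). Establishing this rigidity is where essentially all the work lies, drawing on the techniques for simulating effective dynamical systems by subshifts of finite type and on the authors' earlier study of sets of periods in \cite{PVEJDLT}.
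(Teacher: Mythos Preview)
Your treatment of the first inclusion is correct and matches the paper's argument: project onto the cylinder $\ZZ^2/\ZZ v$, obtain a one-dimensional SFT, and test whether it has an aperiodic point (equivalently, two communicating distinct cycles in its graph), enumerating over multiples of the primitive vector.

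The second inclusion, however, has a genuine gap at its core. You invoke ``a self-simulating hierarchical layer in the spirit of Robinson, Kari, or fixed-point tile sets'' and then assert that in a configuration periodic along $(dq,dp)$ the hierarchy ``is truncated at scale $\Theta(d)$, its top macro-tiles seeing the direction $(q,p)$''. But the defining feature of those hierarchical constructions is that they admit \emph{no} periodic configurations whatsoever: every valid tiling carries the full infinite hierarchy, so there are no ``top macro-tiles'' and nothing rolls up on any cylinder. You cannot simultaneously use such a layer to force structure and expect periodic tilings to survive. Some mechanism must be added that allows the hierarchy to terminate, and that mechanism must be tied to the slope --- yet you never say what it is. The companion claim that a macro-tile ``recovers $p/q$ from the local combinatorics'' is likewise unsupported: the slope is a global datum, and nothing in a hierarchical square of side $2^n$ determines the ratio $q/p$ as opposed to merely the scale $d$.

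The paper's construction solves exactly these two problems, and does so by \emph{avoiding} hierarchy altogether. An East-deterministic aperiodic background is overlaid with ``breaking'' rows and columns; in any periodic tiling these breaks must occur (else a quarter-plane is aperiodic), and further layers force them into a flat grid of \emph{equal-sized} squares with a \emph{constant horizontal offset} between consecutive columns of squares. The width of each square is $p$ and the offset is $q$: the slope is thus written directly into the local geometry, with no hierarchy and no extraction needed. A component $P$ converts $(p,q)$ to binary and strips common factors of $2$; component $T_M$ runs $M$ on this input inside each square and must halt; component $S$ synchronises the aperiodic backgrounds so that the only possible period is along $(p,q)$; and a two-colour layer $A$ breaks biperiodicity. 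Your sketch would need a comparably explicit device to make the slope locally visible and to let periodicity coexist with the structural layer; as written, it defers precisely the step where the difficulty lies.
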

As a consequence, one might for example build a tiling system which admits
slopes arbitrary close to 0, but does not admit 0 as a slope.

This paper is organized as follows. We first give the definition of 
tiling systems, and an encoding of Turing machines that will be used later.
Then we proceed to the proof of the theorem. The main part of this paper is
a construction, for any recursively enumerable set $R$, of a tiling system
with $R$ as a set of slopes.
\section{Definitions}\label{S:Tpc}
\subsection{Tilings}

Usually when considering tiling systems, Wang rules are used.
We use here a generalization that is equivalent in terms of
expressivity  but makes the constructions easier.

While Wang rules consider only adjacent tiles only, our rules may
consider an arbitrary large (but finite) neighborhood of tiles.

A \emph{tiling} of $\ZZ^2$ with a finite set of tiles $T$ is a mapping $c:\ZZ^2\to T$.
A \emph{pattern} of neighborhood $N \subseteq \ZZ^2$ is a
mapping from $N$ to $T$. A pattern is finite if $N$ is finite.
A \emph{tiling system} is a pair $(T,F)$, where $F$ is a finite set of
finite patterns.
A tiling $c$ is said to be \emph{valid} if and only if none of the patterns of $F$ ever appear in $c$.
Since the number of forbidden patterns is finite, we could specify the rules by \emph{allowed}
patterns as well. We give an example of such a tiling system with the tiles of figure~\ref{ex1_tuiles}a
and the forbidden patterns of figure~\ref{ex1_tuiles}b. The allowed tilings are shown in figure~\ref{ex1_pavages}.

\begin{figure}[htbp]
 \begin{center}
$(a)$~\begin{tikzpicture}[scale=0.7]
 \filldraw[color=myyellow] (0,0) rectangle (1,1);
 \filldraw[color=myblue] (2,0) rectangle (3,1);
 \filldraw[color=white] (4,0) rectangle (5,1);
\end{tikzpicture}
~~~~~~~~~~~~$(b)$~\begin{tikzpicture}[scale=0.6]
 \filldraw[color=myyellow] (0,0) rectangle (1,1);
 \filldraw[color=myblue] (1,0) rectangle (2,1);
 \filldraw[color=myblue] (3,0) rectangle (4,1);
 \filldraw[color=myyellow] (4,0) rectangle (5,1);
 \filldraw[color=myyellow] (6,1) rectangle (7,2);
 \filldraw[color=myblue] (6,0) rectangle (7,1);
\end{tikzpicture}

 \end{center}
 \caption{The set of tiles $(a)$ and the forbidden patterns $(b)$.}
 \label{ex1_tuiles}
\end{figure}
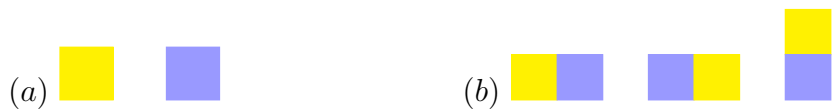

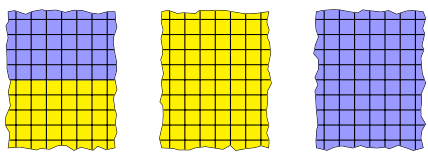
\begin{figure}[htbp]
 \begin{center}
\begin{tikzpicture}[scale=0.2]
 \clip[draw,decorate,decoration={random steps, segment length=3pt, amplitude=1pt}] (0.5,0.5) rectangle (7.5,9.5);
 \foreach \x in {0,...,7}
  \foreach \y in {0,...,4}
   \filldraw[color=myyellow] (\x,\y) rectangle (\x+1,\y+1);
 \foreach \x in {0,...,7}
  \foreach \y in {5,...,9}
   \filldraw[color=myblue] (\x,\y) rectangle (\x+1,\y+1);
 \draw (0,0) grid (8,10);
\end{tikzpicture}
~~
\begin{tikzpicture}[scale=0.2]
 \clip[draw,decorate,decoration={random steps, segment length=3pt, amplitude=1pt}] (10+0.5,0.5) rectangle (10+7.5,9.5);
 \foreach \x in {0,...,7}
  \foreach \y in {0,...,9}
   \filldraw[color=myyellow] (10+\x,\y) rectangle (10+\x+1,\y+1);
 \draw (10+0,0) grid (10+8,10);
\end{tikzpicture}
~~
\begin{tikzpicture}[scale=0.2]
 \clip[draw,decorate,decoration={random steps, segment length=3pt, amplitude=1pt}] (10+0.5,0.5) rectangle (10+7.5,9.5);
 \foreach \x in {0,...,7}
  \foreach \y in {0,...,9}
   \filldraw[color=myblue] (10+\x,\y) rectangle (10+\x+1,\y+1);
 \draw (10+0,0) grid (10+8,10);
\end{tikzpicture}

 \end{center}
 \caption{The only valid tilings of the system.}
 \label{ex1_pavages}
\end{figure}
\newpage
\subsection{(a)periodicity}
A tiling $c$ is \emph{periodic of period} $v = (v_x, v_y)\in
\ZZ^2$ if for all points $x,y\in\ZZ$, $c(x,y)=c(x+v_x,y+v_y)$.
The \emph{direction} of a vector $v \not= (0,0)$ is $\theta = v_y/v_x \in
\mathbb{Q} \cup \{\infty\}$ with the
convention $\theta = \infty$ if $v_x =0$.

A tiling is \emph{periodic along a direction $\theta$} if it is
periodic of period $v \not=(0,0)$ and $v$ is of direction
$\theta$.

For a given tiling $c$, there are three cases:
\begin{itemize}
	\item Either $c$ is periodic of period $v,w$ and $v,w$ are of
	  different directions. In this case, the tiling $c$ is \emph{biperiodic}: there exists
an integer $n\in \NN$ (the period) so that ${c(x,y) = c(x+n,y) = c(x,y+n)}$, and as a
	  consequence $c$ is periodic along all directions $\theta \in \QQ \cup \{\infty\}$
	\item $c$ is periodic along one direction $\theta$ only. In this
	  case, we will call $\theta$ the \emph{slope} of $c$.
	\item $c$ has no nonzero vector of periodicity. $c$ is then
	  called aperiodic.
\end{itemize}	
The set of slopes of a tiling system $\tau$, noted $\Sl_\tau$, is the set of the
slopes of all valid tilings by $\tau$.
As an example, the first tiling in fig.\ref{ex1_pavages} is periodic
of vector $(1,0)$ (hence of slope $0$) and the two other tilings are
biperiodic (hence have no slope). As a consequence, $\Sl_\tau = \{
  0\}$ for this example. Using rotated versions of this elementary
tiling system, we can produce for each $\theta \in \QQ \cup \{\infty\}$ a
tiling system $\tau$ so that $\Sl_\tau = \{ \theta\}$.

A tiling system is \emph{aperiodic} if and only if it tiles the plane but
all valid tilings are aperiodic. Such tiling systems have been
shown to exist \cite{Berger2} and are at the core of the undecidability of
the \emph{domino problem} (decide whether a given tiling system admits
a valid tiling).
J. Kari \cite{KariNil} gave such a tiling system with an 
interesting property: determinism.  A tiling system is
\emph{NW-deterministic} (for \emph{North-West}) if it is given by forbidden
patterns of shape \tikz[scale=0.15]{\draw (0,0) rectangle (-2,1) (0,0) rectangle (-1,2);}
and given two tiles
respectively at the north and west of a given cell, there is
at most one tile that can be put in this cell so that the finite pattern is valid.
The mechanism is shown below:

 \begin{center}
\begin{tikzpicture}[scale=0.3]
 \fill[color=green] (0,0) rectangle (1,1);
 \fill[color=red] (1,1) rectangle (2,2);
 \node[right] (l) at (2,1) {};
 \begin{scope}[shift={(7,0)}]
 \node[left] (r) at (0,1) {};
 \fill[color=green] (0,0) rectangle (1,1);
 \fill[color=red] (1,1) rectangle (2,2);
 \fill[color=blue] (1,0) rectangle (2,1);
 \end{scope}
 \path[thick,-latex] (l) edge (r);
\end{tikzpicture}
 \end{center}

If we modify the forbidden patterns of this tiling system in the following way :

 \begin{center}
\begin{tikzpicture}[scale=0.3]
 \fill[color=yellow] (0,0) rectangle (1,1);
 \fill[color=pink] (1,1) rectangle (2,2);
 \fill[color=cyan] (1,0) rectangle (2,1);
 \node[right] (l) at (2,1) {};
 \begin{scope}[shift={(7,0)}]
 \node[left] (r) at (0,1) {};
 \fill[color=yellow] (0,0) rectangle (1,1);
 \fill[color=pink] (0,1) rectangle (1,2);
 \fill[color=cyan] (1,0) rectangle (2,1);
 \end{scope}
 \path[thick,-latex] (l) edge (r);
\end{tikzpicture}
 \end{center}
 a tile will be forced by the one on its
west and on its northwest, we will call this East-determinism :

 \begin{center}
\begin{tikzpicture}[scale=0.3]
 \fill[color=green] (0,0) rectangle (1,1);
 \fill[color=red] (0,1) rectangle (1,2);
 \node[right] (l) at (2,1) {};
 \begin{scope}[shift={(7,0)}]
 \node[left] (r) at (0,1) {};
 \fill[color=green] (0,0) rectangle (1,1);
 \fill[color=red] (0,1) rectangle (1,2);
 \fill[color=blue] (1,0) rectangle (2,1);
 \end{scope}
 \path[thick,-latex] (l) edge (r);
\end{tikzpicture}
 \end{center}

 East-determinism has the interesting property that if we set a whole column of the plane
 then the whole half plane on its east will be determined by it. Moreover, this tiling system
 is also aperiodic (the tilings are skewed versions of the original one; diagonal lines 
 are transformed into columns).

\subsection{Computability}

The undecidability of the domino problem \cite{Berger2} hinted earlier
also comes from a straightforward encoding of Turing machines into
tilings. We provide here such an encoding for future reference.

For a given  Turing machine $M$, consider the tiling system $\tau_M$
presented in figure~\ref{mt_tuiles}. The tiling system is given by \emph{Wang tiles}, i.e., we can
only glue two tiles together if they coincide on their common edge.
We now give some details on the picture:
\begin{itemize}
\item $s_0$ in the tiles is the initial state of the Turing machine.
\item The first tile corresponds to the case where the Turing machine, given the state $s$ and the letter $a$
chose to go to the left and to change from $s$ to $s'$, writing $a'$. The two
other tiles are similar.
\item $h$ represents a halting state. Note that the only states that can
  appear in the last step of a computation (before a border appears)
are halting states.
\end{itemize}
This tiling system $\tau_M$ has the following property: there is an accepting path
for the word $u$ in time (less than) $t$ using space (less than) $w$ if and only if we can tile a
rectangle of size $(w+2) \times t$  with white borders, the  first row containing the input.

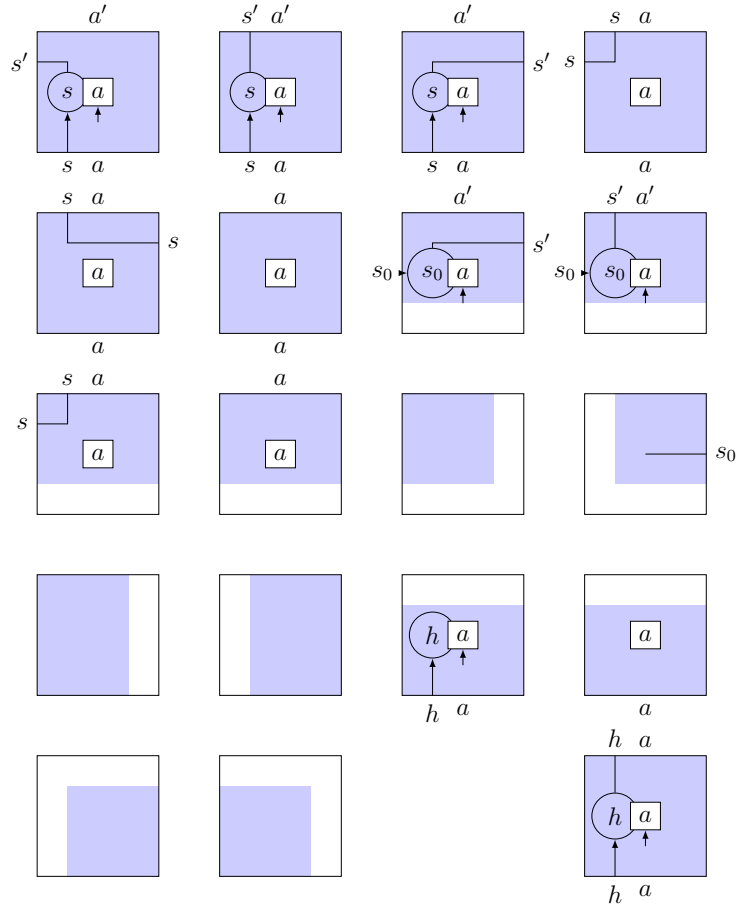
\begin{figure}
 \begin{center}
\scalebox{0.8}{
\begin{tikzpicture}[auto,scale=0.5]
\filldraw[fill=blue!20] (0,0) rectangle (4,4);
\node[draw,circle] (etat) at (1,2) {$s$};
\draw[-latex] (1,0) node[below] (basetat) {$s$} -- (etat);
\draw (etat) -- (1,3) -- (0,3);
\node[draw,rectangle,fill=white] (ruban) at (2,2) {$a$};
\draw[-latex] (2,1) -- (ruban);
\node[left] at (0,3) (hautetat) {$s'$};
\node[above] at (2,4) (hautruban) {$a'$};
\node[below] at (2,0) (basruban) {$a$};
\begin{scope}[shift={(6,0)}]
\filldraw[fill=blue!20] (0,0) rectangle (4,4);
\node[draw,circle] (etat) at (1,2) {$s$};
\draw[-latex] (1,0) node[below] (basetat) {$s$} -- (etat);
\draw (etat) -- (1,4);
\node[draw,rectangle,fill=white] (ruban) at (2,2) {$a$};
\draw[-latex] (2,1) -- (ruban);
\node[above] at (1,4) (hautetat) {$s'$};
\node[above] at (2,4) (hautruban) {$a'$};
\node[below] at (2,0) (basruban) {$a$};
\end{scope}
\begin{scope}[shift={(12,0)}]
\filldraw[fill=blue!20] (0,0) rectangle (4,4);
\node[draw,circle] (etat) at (1,2) {$s$};
\draw[-latex] (1,0) node[below] (basetat) {$s$} -- (etat);
\draw (etat) -- (1,3) -- (4,3);
\node[draw,rectangle,fill=white] (ruban) at (2,2) {$a$};
\draw[-latex] (2,1) -- (ruban);
\node[right] at (4,3) (hautetat) {$s'$};
\node[above] at (2,4) (hautruban) {$a'$};
\node[below] at (2,0) (basruban) {$a$};
\end{scope}
\begin{scope}[shift={(18,0)}]
\filldraw[fill=blue!20] (0,0) rectangle (4,4);
\draw  (0,3) -- (1,3) -- (1,4);
\node[draw,rectangle,fill=white] (ruban) at (2,2) {$a$};
\node[left] at (0,3)  (basetat) {$s$};
\node[above] at (1,4) (hautetat) {$s$};
\node[above] at (2,4) (hautruban) {$a$};
\node[below] at (2,0) (basruban) {$a$};
\end{scope}
\begin{scope}[shift={(0,-6)}]
\filldraw[fill=blue!20] (0,0) rectangle (4,4);
\draw  (4,3) -- (1,3) -- (1,4);
\node[draw,rectangle,fill=white] (ruban) at (2,2) {$a$};
\node[right] at (4,3)  (basetat) {$s$};
\node[above] at (1,4) (hautetat) {$s$};
\node[above] at (2,4) (hautruban) {$a$};
\node[below] at (2,0) (basruban) {$a$};
\end{scope}
\begin{scope}[shift={(6,-6)}]
\filldraw[fill=blue!20] (0,0) rectangle (4,4);
\node[draw,rectangle,fill=white] (ruban) at (2,2) {$a$};
\node[above] at (2,4) (hautruban) {$a$};
\node[below] at (2,0) (basruban) {$a$};
\end{scope}
\begin{scope}[shift={(12,-6)}]
\fill[color=blue!20] (0,1) rectangle (4,4);
\draw (0,0) rectangle (4,4);
\node[draw,circle] (etat) at (1,2) {$s_0$};
\draw (etat) -- (1,3) -- (4,3);
\draw[-latex] (0,2) -- (etat);
\node[draw,rectangle,fill=white] (ruban) at (2,2) {$a$};
\draw[-latex] (2,1) -- (ruban);
\node[left] at (0,2) (basetat) {$s_0$};
\node[right] at (4,3) (hautetat) {$s'$};
\node[above] at (2,4) (hautruban) {$a'$};
\end{scope}
\begin{scope}[shift={(18,-6)}]
\fill[color=blue!20] (0,1) rectangle (4,4);
\draw (0,0) rectangle (4,4);
\node[draw,circle] (etat) at (1,2) {$s_0$};
\draw (etat) -- (1,4);
\draw[-latex] (0,2) -- (etat);
\node[draw,rectangle,fill=white] (ruban) at (2,2) {$a$};
\draw[-latex] (2,1) -- (ruban);
\node[above] at (1,4) (hautetat) {$s'$};
\node[above] at (2,4) (hautruban) {$a'$};
\node[left] at (0,2) (basetat) {$s_0$};
\end{scope}
\begin{scope}[shift={(0,-12)}]
\fill[color=blue!20] (0,1) rectangle (4,4);
\draw (0,0) rectangle (4,4);
\draw (0,3) -- (1,3) -- (1,4);
\node[draw,rectangle,fill=white] (ruban) at (2,2) {$a$};
\node[above] at (1,4) (hautetat) {$s$};
\node[left] at (0,3) (basetat) {$s$};
\node[above] at (2,4) (hautruban) {$a$};
\end{scope}
\begin{scope}[shift={(6,-12)}]
\fill[color=blue!20] (0,1) rectangle (4,4);
\draw (0,0) rectangle (4,4);
\node[draw,rectangle,fill=white] (ruban) at (2,2) {$a$};
\node[above] at (2,4) (hautruban) {$a$};
\end{scope}
\begin{scope}[shift={(12,-12)}]
\fill[color=blue!20] (0,1) rectangle (3,4);
\draw (0,0) rectangle (4,4);
\end{scope}
\begin{scope}[shift={(18,-12)}]
\fill[color=blue!20] (4,1) rectangle (1,4);
\draw (0,0) rectangle (4,4);
\draw (2,2) -- (4,2);
\node[right] at (4,2) (basetat) {$s_0$};
\end{scope}
\begin{scope}[shift={(0,-18)}]
\fill[color=blue!20] (0,0) rectangle (3,4);
\draw (0,0) rectangle (4,4);
\end{scope}
\begin{scope}[shift={(6,-18)}]
\fill[color=blue!20] (4,0) rectangle (1,4);
\draw (0,0) rectangle (4,4);
\end{scope}
\begin{scope}[shift={(12,-18)}]
\fill[color=blue!20] (0,3) rectangle (4,0);
\draw (0,0) rectangle (4,4);
\node[draw,circle] (etat) at (1,2) {$h$};
\draw[-latex] (1,0) node[below] (basetat) {$h$} -- (etat);
\node[draw,rectangle,fill=white] (ruban) at (2,2) {$a$};
\draw[-latex] (2,1) -- (ruban);
\node[below] at (2,0) {$a$};
\end{scope}
\begin{scope}[shift={(18,-18)}]
\fill[color=blue!20] (0,0) rectangle (4,3);
\draw (0,0) rectangle (4,4);
\node[draw,rectangle,fill=white] (ruban) at (2,2) {$a$};
\node[below] at (2,0) (basruban) {$a$};
\end{scope}
\begin{scope}[shift={(0,-24)}]
\fill[color=blue!20] (1,0) rectangle (4,3);
\draw (0,0) rectangle (4,4);
\end{scope}
\begin{scope}[shift={(6,-24)}]
\fill[color=blue!20] (3,0) rectangle (0,3);
\draw (0,0) rectangle (4,4);
\end{scope}
\begin{scope}[shift={(18,-24)}]
\filldraw[fill=blue!20] (0,0) rectangle (4,4);
\node[draw,circle] (etat) at (1,2) {$h$};
\draw[-latex] (1,0) node[below] (basetat) {$h$} -- (etat);
\draw (etat) -- (1,4);
\node[draw,rectangle,fill=white] (ruban) at (2,2) {$a$};
\draw[-latex] (2,1) -- (ruban);
\node[above] at (1,4) (hautetat) {$h$};
\node[above] at (2,4) (hautruban) {$a$};
\node[below] at (2,0) (basruban) {$a$};
\end{scope}
\end{tikzpicture}
}
 
 \end{center}
 \caption{A tiling system, given by Wang tiles, simulating a Turing machine : the states 
 are in the circles and the tape is in the rectangles.}
 \label{mt_tuiles}
\end{figure}


\section{The sets of slopes are recursively enumerable}\label{S:Ssre}
We say that a subset $S$ of $\QQ \cup \{ \infty\}$ is \emph{recursively
enumerable} if there exists a Turing machine $M$ that on input
$(p,q) \in \mathbb{Z}^2 \not= (0,0)$ halts if and only if $q/p \in S$.

\[ \begin{array}{rcl}
	\theta \in S &\implies& \forall (p,q),   q/p = \theta, M \text{
	  halts on }(p,q) \\
\theta \not\in S &\implies& \forall (p,q),   q/p = \theta, M \text{
	  does not halt on }(p,q) 
\end{array}\]

The exact definition is irrelevant as all reasonable definitions will
give rise to the same class.
An alternative interesting definition is as follows: A set $S$ is
recursively enumerable if there exists a Turing machine $M$ so that
\[ \theta \in S \iff \exists (p,q),   q/p = \theta \wedge M \text{
	  halts on }(p,q) \]

Using a known projection technique to go down to dimension 1, 
we prove here:

\begin{lemma}
For any tiling system $\tau$, $S_\tau$ is recursively enumerable.
\end{lemma}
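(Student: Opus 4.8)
The plan is to use the classical projection of a two-dimensional configuration having a prescribed period vector down to a bi-infinite one-dimensional sequence, reducing the statement to a decidable property of one-dimensional subshifts of finite type.

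Fix $\tau = (T,F)$. On input $(p,q) \neq (0,0)$, first replace $(p,q)$ by the primitive integer vector $v$ of the same direction $\theta = q/p$ (with $v=(0,1)$ when $p=0$, and the sign normalised), so that the procedure depends only on $\theta$. The key remark is:
\[ \theta \in S_\tau \iff \exists\, n \geq 1\ \text{such that some valid tiling of period } nv \text{ is not biperiodic.} \]
Indeed, a tiling of slope $\theta$ is periodic of some vector of direction $\theta$, hence of $nv$ for some $n\geq 1$, and is not biperiodic; conversely, a valid tiling of period $nv$ that is not biperiodic cannot be periodic along any other direction (two non-parallel periods force biperiodicity) nor aperiodic, so its slope is exactly $\theta$.

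Now fix $n$ and let $X_n$ be the set of valid tilings of $\tau$ of period $nv$. A configuration of period $nv$ is determined by its restriction to a fundamental domain of $\ZZ^2/\ZZ(nv)$, which can be taken to be a strip of bounded width $m$; reading this strip slice by slice puts $X_n$ in bijection with a one-dimensional subshift $Y_n \subseteq (T^m)^{\ZZ}$, and since the patterns of $F$ are finite and $c$ is $nv$-periodic, validity becomes a finite set of forbidden factors on bounded windows of the sequence. Hence $Y_n$ is a subshift of finite type whose presentation is computable from $\tau$, $v$ and $n$. An elementary manipulation of period vectors moreover shows that $c\in X_n$ is biperiodic if and only if the corresponding point of $Y_n$ is periodic for the shift: a period of $c$ not parallel to $v$ combines with $nv$ into a period along the strip, and conversely. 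So $X_n$ contains a non-biperiodic tiling iff $Y_n$ contains a non-periodic point, and the latter is decidable: pass to an edge-shift presentation, delete the vertices lying on no bi-infinite path, and test whether every remaining vertex has in- and out-degree $1$ (i.e. whether the graph is a disjoint union of simple cycles); $Y_n$ has a non-periodic point precisely when this fails.

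Putting this together, the machine witnessing recursive enumerability is: on input $(p,q)$, compute $v$, then for $n = 1, 2, 3, \dots$ build $Y_n$ and run the above decision test, halting as soon as it answers ``a non-periodic point exists''. By the displayed equivalence this machine halts exactly on the pairs $(p,q)$ with $q/p \in S_\tau$. The one point requiring care — rather than a genuine obstacle — is that one really must quantify over all multiples $nv$ and not just over $v$ (a tiling may be periodic of $(0,2)$ but not of $(0,1)$), which is exactly what degrades a single decidable test into a semi-decision procedure; writing out the strip, the induced forbidden factors, and the equivalence between biperiodicity of $c$ and periodicity of its one-dimensional projection is the routine-but-nontrivial remaining work.
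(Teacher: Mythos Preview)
Your proof is correct and follows essentially the same route as the paper: project $v$-periodic tilings to a one-dimensional subshift of finite type, decide for each multiple $nv$ whether a non-(bi)periodic point exists, and enumerate over $n$. The only cosmetic difference is the decidable criterion for a non-periodic bi-infinite walk --- the paper phrases it as ``two distinct cycles, one accessible from the other'' while you phrase it as ``the trimmed graph is not a disjoint union of simple cycles'' --- and these are easily seen to be equivalent.
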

\begin{proof}
We first give a procedure to decide if there is a tiling which is
$(n,0)$-periodic.
Let $k$ be an integer bigger than the size of any forbidden pattern in
$\tau$.

If $w$ is a pattern of support $[0,n-1] \times [0,l]$ for some $l$,
we write $w^\ZZ$ for the pattern of support $\ZZ \times [0,l]$ defined
by $w^\ZZ_{i,j} = w_{(i\mod n),j}$, that is for the horizontal
repetition of $w$.

Let $V$ be the set of all patterns $w$ of size $n \times k$ so that $w^\ZZ$ is correctly tiled.
Consider this a directed graph $G$, where there is an edge from $v$ to $w$
if and only if $(v \otimes w)^\ZZ$ is correctly tiled, where
$v \otimes w$ denotes the pattern of size $n \times 2k$ obtained by
putting $w$ above $v$.

It is then clear that tilings of period $(n,0)$ correspond to biinfinite
walks on this graph, so that there exists a tiling of period $(n,0)$ if 
and only if there exists a cycle in the graph $G$. Furthermore, there
exist a tiling of period $(n,0)$ which is not biperiodic if and only if
we can find two distinct cycles $C_1, C_2$ in the graph so that $C_2$
is accessible from $C_1$. All the construction is clearly algorithmic.

Now for a given $(p,q)$ we use the same procedure, where $w$ is a
pattern of size $|p| \times k|q|$ and
$w^\ZZ$ is of support $\{ (i+np, j+nq), i \leq |p|, j \leq k|q|\}$
and defined by $w^\ZZ_{i+np,j+nq}= w_{i ,j}$.

The following algorithm gives then the expected result: Starting from a given
$(p,q)$, test all  $(p',q')$ so that $q'/p' = q/p$ to see if there
exists a tiling which is $(p',q')$-periodic but not biperiodic.
\end{proof}
\newpage
\section{The recursively enumerable sets are sets of slopes}\label{S:Ress}

\begin{lemma}
For any recursively enumerable set $R\subseteq\QQ\cup\{\infty\}$, there exists a tiling system $\tau$, such
that $\Sl_\tau=R$.
\end{lemma}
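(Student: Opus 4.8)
\medskip

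The plan is to build, from a Turing machine $M$ witnessing the recursive enumerability of $R$, a tiling system $\tau$ whose valid tilings are forced to contain exactly the right periodic structures. The skeleton of the construction is a two-layer superposition. On the first layer I would use an East-deterministic aperiodic tiling system (as recalled in the excerpt after \cite{KariNil}): this layer is there to destroy any unwanted periodicity — it guarantees that any valid tiling that is periodic at all must be periodic along the direction forced by the second layer, and in particular rules out biperiodic tilings and horizontal/vertical periodicity. The second layer carries, for a chosen direction $\theta = q/p$, a family of parallel ``stripes'' of slope $\theta$; inside each stripe, along the direction of periodicity, I would lay down the Turing-machine tiling $\tau_M$ of figure~\ref{mt_tuiles} running $M$ on the input $(p,q)$, together with a ``space–time box'' whose width is tied to $|p|$ and $|q|$ and whose vertical extent grows until $M$ halts. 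The key point is that a tiling periodic of vector $(p,q)$ can exist only if the computation of $M$ on $(p,q)$ actually terminates and hence a finite halting rectangle fits inside one period; if $M$ does not halt on $(p,q)$, the box can never close up consistently with periodicity, so no tiling of slope $\theta$ survives. Conversely, when $M$ halts on $(p,q)$, I exhibit an explicit periodic tiling of vector $(p,q)$ that is not biperiodic, using the determinism of the first layer to certify that no second independent direction of periodicity is present.

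\medskip

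In more detail, the steps I would carry out, in order, are as follows. First, fix the deterministic aperiodic background and recall (or re-prove) that superimposing extra finite-state information on top of it cannot create new vectors of periodicity beyond those already compatible with the background — so the whole system is ``aperiodicity-rigid'': any periodic valid tiling has a single direction of periodicity, determined by the data we print on the second layer. Second, set up the geometry: I encode a rational slope $\theta$ by a primitive integer vector $(p,q)$ (treating $p=0$ and $q=0$, i.e. $\theta=\infty$ and $\theta=0$, as easy special cases as in the one-direction examples already given), and design tiles that can only consistently form infinite families of stripes of exactly that slope, with the period vector inside a stripe equal to an integer multiple of $(p,q)$. Third, inside a stripe I place the computation of $M$ on $(p,q)$ via $\tau_M$: using the property stated for $\tau_M$ in the excerpt, an accepting (halting) run in time $t$ and space $w$ corresponds to a correctly tiled $(w{+}2)\times t$ rectangle with white borders and the input on the first row. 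I glue the top white border of this rectangle to its bottom white border to force periodicity of period proportional to $t$ along the stripe direction — this closing-up is possible exactly when the run halts. Fourth, I verify the two inclusions: (a) $S_\tau \subseteq R$, because any slope realized forces, inside some stripe, a closed computation box of $M$ on the corresponding $(p,q)$, hence $M$ halts on $(p,q)$, hence $q/p\in R$; and (b) $R \subseteq S_\tau$, by, for each $\theta=q/p\in R$, taking $t$ large enough that $M$'s run on $(p,q)$ fits, building the periodic tiling just described, and checking via East-determinism that it has no second period, so its slope is exactly $\theta$.

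\medskip

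The main obstacle I expect is the compatibility between the computation layer and the periodicity one has to \emph{force} rather than merely \emph{allow}. It is easy to make a tiling system in which a halting computation \emph{permits} a periodic tiling; it is harder to guarantee that (i) no valid tiling of slope $\theta$ exists when $M$ does not halt on $(p,q)$ — one must prevent ``cheating'' infinite non-halting configurations (blank tapes, spurious borders, stripes with inconsistent widths) from sneaking in a period — and (ii) the period we do get is exactly of direction $\theta$ and no finer or coarser. Controlling (i) is where the determinism and aperiodicity of the background layer do the real work: a non-halting or ill-formed computation must, because of the rigid background, propagate an inconsistency across the whole plane. Controlling (ii), and in particular ensuring that all of $R$ is hit and nothing outside $R$ — including nearby rationals, so that one really can have slopes arbitrarily close to $0$ without $0$ itself — requires the ``for all $(p,q)$ with $q/p=\theta$'' quantifier in the definition of recursive enumerability to be respected: the stripe geometry must allow the period to be \emph{any} sufficiently large multiple of the primitive vector, so that some multiple accommodates a halting run. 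I would also need a short argument that the finitely many forbidden patterns (of bounded neighborhood) suffice to enforce all of this — standard for such constructions but worth stating carefully.
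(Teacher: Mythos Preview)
Your proposal has a genuine gap at its foundation: the role you assign to the East-deterministic aperiodic layer is not one it can play. If the aperiodic tileset tiles the \emph{entire} plane as a first layer, then the periods of any valid configuration of the product system are the intersection of the periods of its two projections. The first projection, being a tiling by an aperiodic set, is never biperiodic; but its possible one-directional periods (if any exist at all) are intrinsic to that particular aperiodic tiling and are \emph{not} something the second layer can select. Adding stripes on layer two can only further restrict the period lattice, never enlarge it. So either your layer-one configuration has no nonzero period and $S_\tau=\varnothing$, or it has a one-directional period whose direction was already fixed by layer one, and your stripes have no leverage to make it equal to the $\theta$ you want. In short, the ``aperiodicity-rigid'' principle you invoke (``superimposing extra information cannot create new periods, so the period is the one the second layer prints'') is self-defeating: it correctly forbids creating periods, which is exactly why the second layer cannot impose the direction.

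The paper's construction avoids this by \emph{not} laying the aperiodic tiles over the whole plane. Instead it builds an axis-aligned skeleton of vertical columns at distance $p$ and horizontal rows cutting each strip into $p\times p$ squares, with a constant vertical offset $q$ between consecutive strips; the aperiodic tiles live only in the white interior of the squares. A dedicated synchronisation component (called $S$) uses East-determinism to transmit the leftmost column of white tiles from one strip to the next, shifted by the offset, so that the interiors are \emph{compatible} with $(p,q)$-periodicity. The role you wanted the aperiodic layer to play --- ruling out a second independent period --- is handled by a separate, very simple component $A$: two colours propagated along the $(p,q)$ direction, used to paint one half-plane blue and the other yellow, which is $(p,q)$-periodic and has no other period. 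The aperiodic tiles are there for a different reason entirely: in a periodic tiling they force breaking rows and columns to occur infinitely often, so the square skeleton (and hence the Turing-machine box) is unavoidable. Your sketch also leaves unexplained how a \emph{finite} tileset produces ``stripes of slope $\theta$'' for every rational $\theta$, and how the input $(p,q)$ reaches the tape; in the paper both come for free from the axis-aligned geometry ($p$ is the column spacing, $q$ the offset, read off in unary and then compressed).
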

\proof
We use for this proof techniques similar to \cite{PVEJDLT}.
We will construct for each Turing machine $M$, corresponding to a recursively
enumerable set $R$, a tiling system $\tau$ whose
slopes are exactly the rationals $\theta$ accepted by $M$.
We assume that $M$ takes $\theta$ as an input under the form $(p,q)$
in binary and that its input depends only on $q/p$.

We will first build a tiling system $\tau$ that has as slopes $\{\theta  \in   R | 0 < \theta < 1\}$.
The other cases are treated in the same way and the final tiling system is the disjoint union of
the tiling systems treating each case. The special cases $\theta=0$,
$\theta=\infty$, and $\theta=\pm 1$ will be shortly discussed later on.

For the particular case where $p>q>0$ we want to enforce
the fact that when a tiling of the plane has exactly one direction of
periodicity, this direction of periodicity has to be accepted by the Turing 
machine $M$. The tiling $\tau_M$ will enforce the skeleton described
in figure~\ref{skel}, where each square encodes the computation by $M$
proving that the slope $\theta$ is accepted. For this, we need the size of the square to be
arbitrarily large independently of $\theta$, so that the computation of $M$ has enough time
to accept. This skeleton in itself could be biperiodic, 
we will then color the background of each square
to ensure the existence of tilings with only one direction of periodicity.

\begin{figure}[!h]
 \begin{center}
  \begin{tikzpicture}[scale=0.35,xscale=1.58]  
  \draw[clip,decorate,decoration={random steps,segment length=3pt,amplitude=1pt}] (0.5,0) rectangle (14.5,11);
  \def\a{3};
  \foreach \i in {-1,0,...,4}{
   \draw (\i*\a,0) -- (\i*\a,20);
    \foreach \j in {0,...,2}
    {
     \draw (\i*\a,\i+\j*1.5*\a) -- (\i*\a+\a,\i+\j*\a*1.5);
    }
    \foreach \j in {-1,...,2}
    {
     \draw[thick,-latex] (\i*\a+\a/2,\i+\j*\a*1.5+\a*0.72) -- (\i*\a+\a*1.5,\i+\j*\a*1.5+1.05*\a);
    }
  }
  \end{tikzpicture}
 \end{center}
 \caption{Skeleton of the tiling : when the tiling is periodic, the squares appear and each of 
 them is the shifted version of its lower left neighbor. Inside the squares we will encode the 
 Turing machine.}
 \label{skel}
\end{figure}
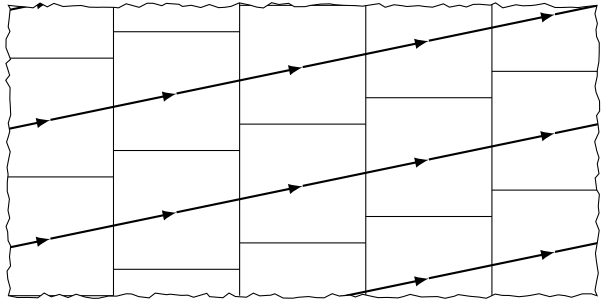

In order to enforce this skeleton, we will use several layers (or components), each of them having 
their own aim, and impose some contraints on how the layers may combine.
We give here $\tau_M=C\times R\times W\times S\times P\times T_M\times A$ where :
\begin{itemize}
 \item $C$ will allow us to make the rows and columns,
 \item $R$ to make the squares,
 \item $W$ to force the periodicity vector and to write the input for the Turing machine,
 \item $S$ to force the aperiodic background of the squares to be the same,
 \item $P$ will reduce the size of the input,
 \item $T_M$ will code the Turing machine $M$,
 \item $A$ will allow slopes of unique periodicity to appear.
\end{itemize}
We will now proceed to the details of the proof, by giving each component and explaining what
it enforces.
\newcommand{\tnoir}{\tikz[scale=0.25]{
\fill[color=black] (0,0) rectangle (1,1);
}\xspace}

\newcommand{\rightmost}{\scalebox{0.25}{\tikz{
\draw (1,1) rectangle (0,0);
\filldraw[fill=black] (0,0) -- (0.5,0.5) -- (0,1) -- cycle;
\filldraw[fill=mygray2] (0,0) -- (0.5,0.5) -- (1,0) --cycle;
\filldraw[fill=mygray1] (0,1) -- (0.5,0.5) -- (1,1) -- cycle;
}}\xspace}

\newcommand{\leftmost}{\scalebox{0.25}{\tikz{
\draw (1,1) rectangle (0,0);
\filldraw[fill=black] (1,0) -- (0.5,0.5) -- (1,1) -- cycle;
\filldraw[fill=mygray2] (0,1) -- (0.5,0.5) -- (1,1) -- cycle;
\filldraw[fill=mygray1] (0,0) -- (0.5,0.5) -- (1,0) --cycle;
}}\xspace}

\newcommand{\betweenrl}{\tikz[scale=0.25]{\filldraw[color=black,fill=mygray2] (1,1) rectangle (0,0);}\xspace}

\newcommand{\betweenlr}{\tikz[scale=0.25]{\filldraw[color=black,fill=mygray1] (1,1) rectangle (0,0);}\xspace}

\begin{description}
\item[Component $C$] The first component is made of an \textbf{East-deterministic} aperiodic set of tiles that we will 
 call white tiles (the white background of figure~\ref{skel}), and we add two sets of tiles
 the horizontal breaking tiles $\{\tnoir \}$ and the vertical breaking tiles $\left\{\leftmost,
 \rightmost,\betweenrl,\betweenlr\right\}$ (the horizontal and vertical lines of figure~\ref{skel}). 
 The rules are simple :
 \begin{itemize}
  \item on the left of a \tnoir there can only be a \tnoir or a \leftmost,
  \item on the right of a \tnoir there can only be a \tnoir or a \rightmost,
  \item above and below a \tnoir, there can only be a white,
  \item above a \leftmost can only be a \betweenrl,
  \item above a \betweenrl can only be a \rightmost or a \betweenrl,
  \item above a \rightmost can only be a \betweenlr,
  \item above a \betweenlr can only be a \leftmost or a \betweenlr.
 \end{itemize}

 To put it in a nutshell, it means that horizontal breaking tiles forms rows that can only
 be broken by vertical breaking tiles, and vertical breaking tiles can only form columns
 that cannot be broken.
 
 In a periodic tiling, we cannot have a quarter of plane filled with
 white (aperiodic tiles). As a consequence, periodic tilings at
 this stage are necessarily formed by a white
 background broken \emph{infinitely many times} by horizontal or vertical
 breaking tiles.

 One more rule we add is that the rules on white tiles
 "jump" over the black tiles. That is to say if we remove a black row, then the white tiles have
 to glue themselves together correctly.
 The valid tilings at this stage are represented on 
 figure~\ref{afterC}.

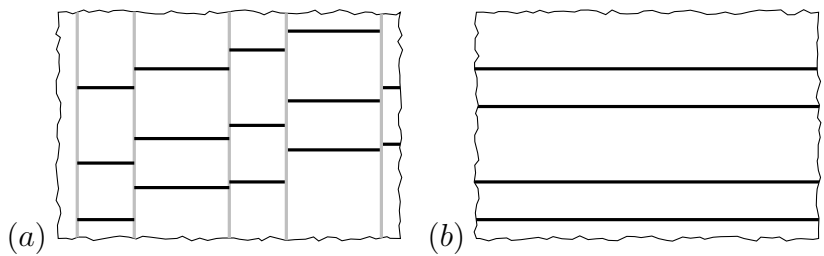
\begin{figure}
 \begin{center}
  \begin{tikzpicture}[scale=0.25]
  \node[left] at (-1,1) {$(a)$};
  \begin{scope}
\draw[clip,decorate,decoration={random steps,segment length=3pt,amplitude=1pt}] (-1,1) rectangle +(18,12);
   \draw[very thick,color=mygray1] (0,0) -- (0,20) (3,0) -- (3,20) (8,0) -- (8,20);
   \draw[very thick,shift={(8,2)}] (0,2) -- (3,2) (0,5) -- (3,5) (0,9) -- (3,9);
   \draw[very thick,shift={(16,4)}] (0,2) -- (3,2) (0,5) -- (3,5) (0,9) -- (3,9);
   \draw[very thick] (0,2) -- (3,2) (0,5) -- (3,5) (0,9) -- (3,9);
  \draw[very thick] (3,6.3) -- (8,6.3) (3,3.7) -- (8,3.7) (3,10) -- (8,10);
  \draw[very thick,shift={(8,2)}] (3,6.3) -- (8,6.3) (3,3.7) -- (8,3.7) (3,10) -- (8,10);
   \draw[very thick,color=mygray1] (11,0) -- (11,20) (16,0) -- (16,20) (19,0) -- (19,20);
  \end{scope}
  \node[left] at (21,1) {$(b)$};
  \begin{scope}[shift={(20,2)}]
\draw[clip,decorate,decoration={random steps,segment length=3pt,amplitude=1pt}] (1,-1) rectangle +(18,12);
   \draw[very thick] (0,0) -- (20,0) (0,2) -- (20,2) (0,6) -- (20,6) (0,8) -- (20,8);
  \end{scope}
  \end{tikzpicture}
 \end{center}
 \caption{Valid periodic tilings are formed of columns of vertical breaking tiles $(a)$ or 
 of rows of horizontal breaking tiles $(b)$. Between two columns of vertical breaking tiles there
 can be rows of horizontal breaking tiles.}
 \label{afterC}
\end{figure}

\newcommand{\tvert}{\scalebox{0.25}{\tikz{
\fill[fill=colorright] (0.5,0) rectangle (1,1);
\fill[fill=colorleft] (0,0) rectangle (0.5,1);
\draw (0,0) rectangle (1,1);
}}\xspace}

\newcommand{\thoriz}{\scalebox{0.25}{\tikz{
\fill[fill=colorleft] (0,0) rectangle (1,0.5);
\fill[fill=colorright] (0,0.5) rectangle (1,1);
\draw (0,0) rectangle (1,1);
}}\xspace}

\newcommand{\tdiag}{\scalebox{0.25}{\tikz{
\fill[fill=colorright] (0,0) -- (0,1) -- (1,0) -- cycle;
\fill[fill=colorleft] (0,1) -- (1,0) -- (1,1) -- cycle;
\draw (0,0) rectangle (1,1);
}}\xspace}

\newcommand{\tgauche}{\scalebox{0.25}{\tikz{
\filldraw[fill=colorleft] (0,0) rectangle (1,1);
}}\xspace}

\newcommand{\tdroite}{\scalebox{0.25}{\tikz{
\filldraw[fill=colorright] (0,0) rectangle (1,1);
}}\xspace}

\newcommand{\jointureg}{\scalebox{0.25}{\tikz{
\filldraw[fill=colorright] (0,0) rectangle (1,1);
\fill[fill=colorleft] (0,0) rectangle (0.5,1) (0.5,0.5) -- (1,0) -- (1,0.5) --cycle;
\draw (0,0) rectangle (1,1);
}}\xspace}

\newcommand{\jointured}{\scalebox{0.25}{\tikz{
\filldraw[fill=colorleft] (0,0) rectangle (1,1);
\fill[fill=colorright] (0.5,0) rectangle (1,1);
\fill[fill=colorright] (0,1) -- (0.5,0.5) -- (0,0.5) --cycle;
\draw (0,0) rectangle (1,1);
}}\xspace}
\item[Component $R$]. The next component will force the apparition of squares between
 two columns of vertical breaking tiles and prevent several infinite rows
 of horizontal breaking tiles to appear. This layer is made of the
 set of tiles $\left\{\tvert,\thoriz,\tdiag,\tgauche,\tdroite,\jointured,\jointureg
 \right\}$, the rules applied
 on this layer are given by Wang tiles. We superimpose the rules as follows :
 \begin{itemize}
  \item \tvert can only be superimposed to \betweenrl,\betweenlr,
  \item \thoriz can only be superimposed to \tnoir,
  \item \jointureg goes on \leftmost, and \jointured goes on \rightmost,
  \item \tgauche,\tdroite,\tdiag are superimposed to the white tiles.
 \end{itemize}

Figure~\ref{compoR} shows how this component $R$ forces rows of black 
 tiles to appear between two gray columns. The distance
 between these black rows is exactly the distance between the gray columns thus black rows and gray
 columns form squares. At this stage the valid periodic tilings cannot be formed of only rows 
 of black tiles anymore. 
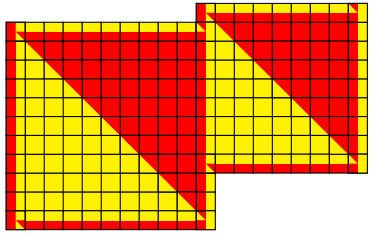
\begin{figure}
 \begin{center}
  \begin{tikzpicture}[scale=0.25]
  \begin{scope}
   \foreach \i in {1,...,9}{
    \foreach \j in {1,...,\i}{
\filldraw[fill=colorright] (9-\i,\j) rectangle +(1,1);
\filldraw[fill=colorleft] (\i,10-\j)  rectangle +(1,1);
    }
   }
   \foreach \i in {1,...,9}{
\fill[fill=colorright] (0,\i) +(0.5,0) rectangle +(1,1);
\fill[fill=colorleft] (0,\i) +(0,0) rectangle +(0.5,1);
\draw (0,\i) +(0,0) rectangle +(1,1);
\fill[fill=colorleft] (\i,0) +(0,0) rectangle +(1,0.5);
\fill[fill=colorright] (\i,0) +(0,0.5) rectangle +(1,1);
\draw (\i,0) +(0,0) rectangle +(1,1);
\fill[fill=colorright] (\i,10-\i) +(0,0) -- +(0,1) -- +(1,0) -- cycle;
\fill[fill=colorleft] (\i,10-\i) +(0,1) -- +(1,0) -- +(1,1) -- cycle;
\draw (\i,10-\i) +(0,0) rectangle +(1,1);
\fill[fill=colorright] (10,\i) +(0.5,0) rectangle +(1,1);
\fill[fill=colorleft] (10,\i) +(0,0) rectangle +(0.5,1);
\draw (10,\i) +(0,0) rectangle +(1,1);
\fill[fill=colorleft] (\i,10) +(0,0) rectangle +(1,0.5);
\fill[fill=colorright] (\i,10) +(0,0.5) rectangle +(1,1);
\draw (\i,10) +(0,0) rectangle +(1,1);
   }
   \foreach \p in {(0,0),(0,10)}{
\filldraw[fill=colorright] \p +(0,0) rectangle +(1,1);
\fill[fill=colorleft] \p +(0,0) rectangle +(0.5,1) +(0.5,0.5) -- +(1,0) -- +(1,0.5) --cycle;
\draw \p +(0,0) rectangle +(1,1);
}
   \foreach \p in {(10,0),(10,10)}{
\filldraw[fill=colorleft] \p +(0,0) rectangle +(1,1);
\fill[fill=colorright] \p +(0.5,0) rectangle +(1,1);
\fill[fill=colorright] \p +(0,1) -- +(0.5,0.5) -- +(0,0.5) --cycle;
\draw \p +(0,0) rectangle +(1,1);
  }
  \end{scope}
  \begin{scope}[shift={(10,3)}]
   \foreach \i in {1,...,7}{
    \foreach \j in {1,...,\i}{
\filldraw[fill=colorright] (7-\i,\j) rectangle +(1,1);
\filldraw[fill=colorleft] (\i,8-\j)  rectangle +(1,1);
    }
   }
   \foreach \i in {1,...,7}{
\fill[fill=colorright] (0,\i) +(0.5,0) rectangle +(1,1);
\fill[fill=colorleft] (0,\i) +(0,0) rectangle +(0.5,1);
\draw (0,\i) +(0,0) rectangle +(1,1);
\fill[fill=colorleft] (\i,0) +(0,0) rectangle +(1,0.5);
\fill[fill=colorright] (\i,0) +(0,0.5) rectangle +(1,1);
\draw (\i,0) +(0,0) rectangle +(1,1);
\fill[fill=colorright] (\i,8-\i) +(0,0) -- +(0,1) -- +(1,0) -- cycle;
\fill[fill=colorleft] (\i,8-\i) +(0,1) -- +(1,0) -- +(1,1) -- cycle;
\draw (\i,8-\i) +(0,0) rectangle +(1,1);
\fill[fill=colorright] (8,\i) +(0.5,0) rectangle +(1,1);
\fill[fill=colorleft] (8,\i) +(0,0) rectangle +(0.5,1);
\draw (8,\i) +(0,0) rectangle +(1,1);
\fill[fill=colorleft] (\i,8) +(0,0) rectangle +(1,0.5);
\fill[fill=colorright] (\i,8) +(0,0.5) rectangle +(1,1);
\draw (\i,8) +(0,0) rectangle +(1,1);
   }
   \foreach \p in {(0,0),(0,8)}{
\filldraw[fill=colorright] \p +(0,0) rectangle +(1,1);
\fill[fill=colorleft] \p +(0,0) rectangle +(0.5,1) +(0.5,0.5) -- +(1,0) -- +(1,0.5) --cycle;
\draw \p +(0,0) rectangle +(1,1);
}
   \foreach \p in {(8,0),(8,8),(0,7)}{
\filldraw[fill=colorleft] \p +(0,0) rectangle +(1,1);
\fill[fill=colorright] \p +(0.5,0) rectangle +(1,1);
\fill[fill=colorright] \p +(0,1) -- +(0.5,0.5) -- +(0,0.5) --cycle;
\draw \p +(0,0) rectangle +(1,1);
  }
  \end{scope}
  \end{tikzpicture}
 \end{center}
 \caption{Component $R$ forces squares.}
 \label{compoR}
\end{figure}

\item[Component $W$] What this component does is that it synchronises the
offsets between squares of two neighboring columns, and forces all columns to be at equal distance
of their two neighboring columns, for all of them. As a side
effect, it also writes the offset between two squares (which we call
$q$) in each square.
In order to do that, what we do is that we prolongate the black rows of
each column into their direct neighbors with two new layers, one for the left and one for the right.
The end of the black row then sends a diagonal signal which changes its direction when it collides
with the projected lines of the neighbors and its colision with the column has to coincide with the 
projection of the other column. Figure~\ref{projections}.a shows how this mechanism works.
The collision of the signal sent on the right extremity of the black lines
marks the end of the input $q$ on each square.
We add two other sublayers to make the white rows of same width. The first one sends a signal from
the left extremity of a black line which has to meet the next column at the exact point 
of the extension
of the square. The second one does the same for the right extremity. 
Figure~\ref{projections}.b shows these signals.

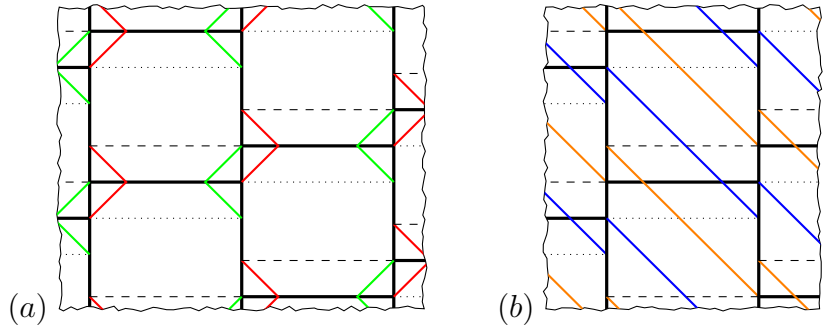
\begin{figure}
 \begin{center}
  \begin{tikzpicture}[scale=0.4]
\begin{scope}
\node[left] at (4,2) {$(a)$};
\draw[clip,decorate,decoration={random steps,segment length=3pt,amplitude=1pt}] (4,2) rectangle +(12,10);
  \def\taille{5};
  \def\decalage{1.2};
  \foreach \i in {-4,...,6}{
   \draw[very thick] (\i*\taille,0) -- (\i*\taille,20);
    \foreach \j in {-6,...,10}
    {
\draw[very thick] (\i*\taille,\i*\decalage+\j*\taille) -- (\i*\taille+\taille,\i*\decalage +\j*\taille);
\draw[dotted] (\i*\taille+\taille,\i*\decalage+\j*\taille) -- (\i*\taille+2*\taille,\i*\decalage +\j*\taille);
\draw[dashed] (\i*\taille,\i*\decalage+\j*\taille) -- (\i*\taille-\taille,\i*\decalage +\j*\taille);
\draw[thick,color=red] (\i*\taille,\i*\decalage+\decalage+\j*\taille) -- (\i*\taille+\decalage,\i*\decalage +\j*\taille) -- (\i*\taille,\i*\decalage-\decalage+\j*\taille);
\draw[thick,color=green] (\i*\taille,\i*\decalage+\j*\taille) -- (\i*\taille-\decalage,\i*\decalage-\decalage +\j*\taille) -- (\i*\taille,\i*\decalage-2*\decalage+\j*\taille);
    }
  }
\end{scope}
\begin{scope}[xshift=17cm]
\node[left] at (3,2) {$(b)$};
\draw[clip,decorate,decoration={random steps,segment length=3pt,amplitude=1pt}] (3,2) rectangle +(9,10);
  \def\taille{5};
  \def\decalage{1.2};
  \foreach \i in {-4,...,6}{
   \draw[very thick] (\i*\taille,0) -- (\i*\taille,20);
    \foreach \j in {-6,...,10}
    {
\draw[very thick] (\i*\taille,\i*\decalage+\j*\taille) -- (\i*\taille+\taille,\i*\decalage +\j*\taille);
\draw[dotted] (\i*\taille+\taille,\i*\decalage+\j*\taille) -- (\i*\taille+2*\taille,\i*\decalage +\j*\taille);
\draw[dashed] (\i*\taille,\i*\decalage+\j*\taille) -- (\i*\taille-\taille,\i*\decalage +\j*\taille);
\draw[thick,color=blue] (\i*\taille,\i*\decalage -\decalage +\j*\taille) -- (\i*\taille + \taille,\i*\decalage - \decalage +\j*\taille-\taille);
\draw[thick,color=orange] (\i*\taille,\i*\decalage +\decalage +\j*\taille) -- (\i*\taille + \taille,\i*\decalage + \decalage +\j*\taille-\taille);
    }
  }
\end{scope}
  \end{tikzpicture}
 \end{center}
 \caption{
 The dotted row (resp. dashed) corresponds to the prolongation on 
 the right (resp. left) of the black cells.
 In $(a)$ the signals sent from the extremities of the rows forming the square forces 
 the offset between rectangles of three neighboring columns to be 
 exactly the same for any of them. In $(b)$ the signals sent from the extremities 
 force the distance between columns to be identical.
 }
 \label{projections}
\end{figure}

\newcommand{\trsdroite}{\scalebox{0.25}{\tikz{
\draw (0,0) rectangle (1,1);
\draw[line width=3pt,-latex] (0,0.5) -- (1,0.5);
}}\xspace}
\newcommand{\trsdroiteg}{\scalebox{0.25}{\tikz{
\filldraw[fill=gray] (0,0) rectangle (1,1);
\draw[line width=3pt,-latex] (0,0.5) -- (1,0.5);
}}\xspace}

\newcommand{\trsdiagg}{\scalebox{0.25}{\tikz{
\filldraw[fill=gray] (0,0) rectangle (1,1);
\draw[line width=3pt,-latex] (0,0) -- (1,1);
}}\xspace}

\newcommand{\trsdiag}{\scalebox{0.25}{\tikz{
\draw (0,0) rectangle (1,1);
\draw[line width=3pt,-latex] (0,0) -- (1,1);
}}\xspace}

\item[Component $S$] This component is meant to synchronize the 
aperiodic backgrounds of all the squares. In order to do that, we only need to
transmit the first column after a vertical breaking column
since our initial aperiodic tiling system is East-deterministic. 

In order to do that, we
take these tiles $\{\trsdroite,\trsdiag,\trsdiagg,\trsdroiteg\}$, with the following rules :
\begin{itemize}
 \item on the right,above and below  a \trsdiag there can only be a \trsdiag or a \trsdiagg. 
 \item on the left of a \leftmost we necessarily have a \trsdiagg and the south 
  western neighbor of a \trsdiagg, if the tile is a white, is a \trsdiagg or a \trsdroiteg,
 \item the lower left white tile of a square is necessarily a \trsdroiteg. The rules 
  on \trsdroiteg is that there can only be a \trsdroiteg or a \trsdiag on a white tile to its right,
 \item the vertical/horizontal breaking tiles have necessarily a \trsdroite on them. 
\end{itemize}
The tiling obtained inside a square is shown on figure~\ref{transmi}.
We add a sublayer that is a copy of the white tiles with the rules that the tiles of this
component on the right of this column are identical to the white ones on component $C$ and
that this copy is transmitted to the tile pointed by the arrow. Then with the property that 
the black tiles continue the rules on the whites, the whole aperiodic background between
two vertical breaking columns is exactly the same but shifted by the offset.

\begin{figure}
 \begin{center}
\scalebox{0.25}{
  \begin{tikzpicture}
\newcommand{\tn}{
\fill[color=black] (0,0) rectangle (1,1);
}

\newcommand{\rmost}[2]{
\draw (#1,#2) +(0,0) rectangle +(1,1);
\filldraw[fill=black] (#1,#2) +(0,0) -- +(0.5,0.5) -- +(0,1) -- cycle;
\filldraw[fill=mygray2] (#1,#2) +(0,0) -- +(0.5,0.5) -- +(1,0) --cycle;
\filldraw[fill=mygray1] (#1,#2) +(0,1) -- +(0.5,0.5) -- +(1,1) -- cycle;
}
\newcommand{\lmost}[2]{
\draw (#1,#2) +(0,0) rectangle +(1,1);
\filldraw[fill=black] (#1,#2)  +(1,0) -- +(0.5,0.5) -- +(1,1) -- cycle;
\filldraw[fill=mygray2] (#1,#2) +(0,1) -- +(0.5,0.5) -- +(1,1) -- cycle;
\filldraw[fill=mygray1] (#1,#2) +(0,0) -- +(0.5,0.5) -- +(1,0) --cycle;
}

\foreach \i in {1,...,9}{
    \foreach \j in {0,...,6}{
	\draw (\j,\i) rectangle +(1,1);
	\draw[line width=3pt,-latex] (\j,\i) +(0,0.5) -- +(1,0.5);
    }}
   \foreach \i in {0,...,6}{
	\filldraw[fill=gray] (\i,0) rectangle ++(1,1);
	\draw[line width=3pt,-latex] (\i,0) +(0,0.5) -- +(1,0.5);
    }
   \foreach \i in {0,...,9}{
    \foreach \j in {1,...,3}{
	\ifthenelse{\i=\j}{
	\filldraw[fill=gray] (\j+6,\i) rectangle +(1,1);
	\draw[line width=3pt,-latex] (\j+6,\i) -- +(1,1);
	}{
	\draw (\j+6,\i) rectangle +(1,1);
	\draw[line width=3pt,-latex] (\j+6,\i) -- +(1,1);
	}
    }}
 \fill[fill=mygray2] (-1,0) rectangle (0,6); 
 \fill[fill=mygray1] (-1,7) rectangle (0,10); 
 \fill[fill=mygray2] (10,4) rectangle (11,10); 
 \fill[fill=mygray1] (10,0) rectangle (11,3); 
 \filldraw[fill=black] (0,0) rectangle +(10,-1);
 \filldraw[fill=black] (0,10) rectangle +(10,1);
 \rmost{10}{10}
 \rmost{10}{-1}
 \rmost{-1}{6}
 \lmost{10}{3}
 \lmost{-1}{-1}
 \lmost{-1}{10}
 \draw (-1,-1) grid (0,11);
 \draw (10,0) grid (11,11);
\end{tikzpicture}
 }
 \end{center}
 \caption{Tiles allowing to transmit the aperiodic background.}
 \label{transmi} 
\end{figure}

\item[Component $P$]
Now each square contains two data: its size ($p$) and the offset to
the next square $q$, both in unary. We will pass them as input to 
the Turing machine after some transformation.

The idea is to transform the unary input $(p,q)$ into a
smaller binary one $(p',q')$ where $\gcd(p',q')$ is not a multiple of two. Doing that is fairly easy :
we first need to convert the input in binary; this 
can be done by the iteration of the transducer  of figure~\ref{transdu_conv}: 
starting from $000\dots 00$ we obtain the binary representation of $p$ (least significant bit on the rightmost part) in $p$
iterations of the transducer. Then we strip the
binary representation of $p$ and $q$ of their common last zeroes.

\begin{figure}
\begin{center}
\begin{tikzpicture}[->,node distance=3cm,auto,initial text=]
        
        \tikzstyle{every state}=[minimum size=10mm]

\node[state,accepting] (A)                    {$q_1$};
\node[initial,state]         (C) [left of=A] {$q_0$};


\path (A) edge [loop above]   node[above] {$1|0$} (A);
\path (C) edge   node[above] {$0|1$} (A);
\path (C) edge [loop above]   node[above] {$1|1,0|0$} (C);

\end{tikzpicture}
\end{center}
\caption{A transducer tranforming $n$ in binary into $n+1$.}
\label{transdu_conv}
\end{figure}
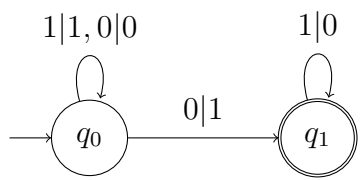

\item[Component $T_M$] This layer implements the Turing machine $M$, the 
input has been computed by layer $P$. Note that the Turing machine has
to halt for the tiling to be valid.

\item[Component $A$] This layer is made of only two tiles, a yellow and a blue one.
It will be superimposed to white tiles and to the \betweenrl of the vertical breaking
tiles of component $C$ only.
The rules are that two neighboring tiles (horizontally and vertically) have the 
same color. It is easy to see that the color is uniform inside a square and that 
it spreads to the upper right and lower left neighboring squares. Thus the squares along the
direction of periodicity have the same color.
\end{description}

We now prove that the preceding construction works.
\begin{description}
\item[(1) Any slope is an accepted input of $M$]

Let $\theta = q/p \in S_\tau$ be a slope of periodicity of $\tau$, 
with $p>q>0$ relatively prime.

By construction, the tiling has to be formed of squares of identical
size with constant offset (components $C$, $R$, $W$). Their aperiodic background
has to be the same on each column (component $S$), so that in fact the
tiling is periodic along direction $(m,n)$ where $m$ and $n$ denote
respectively the width and offset of the tiling. As a consequence, the
tiling is of slope $\theta=m/n=q/p \in ]0;1[$ and  we have $(n,m)=2^kk'(p,q)$ for some $k,k'$ with $k'$ odd.

Now the Turing Machine on each square has $(k'q,k'p)$
as an input and halts. Hence the slope $k'q/k'p$ is accepted by the
machine, so $q/p \in R$, which proves $S_\tau \subseteq R\ \cap\ ]0 ;1 [$.

\item[(2)  Any accepted input of $M$ is a slope of some tiling]

Let $\theta \in R$ be an accepted input of $M$ with
$\theta = q/p$, $p>q>0$ and $p,q$ relatively prime.

There exists a time $t$ and a space $s$ such that $M$ accepts $(p,q)$
in time $t$ and space $s$ and $s\leq t$.
Take $(m,n) = 2^{\left\lceil \log t\right\rceil} (p,q) \geq (t,s)$
Now the $m \times m$ square is big enough for the computation on input
$(p,q)$ to succeed. Hence there is a tiling of period
$(m,n)$ and component $A$ allows us to make the direction of periodicity
unique by dividing the plane into two colors, half a plane yellow and half a
plane blue. Hence $R\ \cap\ ]0 ;1 [ \subseteq S_\tau$. 
\end{description}
This finishes the proof for the case $0 <\theta < 1$, i.e. $p > q > 0$.

The cases where
$q>p>0$, $-p>q>0$, or $q>-p>0$ are treated in a very similar way: rotating the tiling system
we just constructed and changing the way the input is written on the tape (to invert the inputs,
or add a minus sign) is enough. However the remaining cases ($p=\pm q,
p=0, q=0$) need special treatment\footnote{As this corresponds to four
  specific different $\theta$s, note that we could treat them
  nonconstructively, adding if necessary four new tiling systems having
  predescribed slopes. 
  }.

For these cases, the 
construction above does not work, by that we mean that just rotating it and modifying
slightly the Turing machine of component $T_M$ won't do the trick. 
However it is actually simpler. We now make squares facing
one another, obtaining a regular grid. This requires less tiles for
component $C$ and no component $W$.
Then according to the case, components $C$,$S$ and $A$ are modified as
follows:
\begin{itemize}
 \item for $p=q$ ($\theta=1$), $S$ just transmits diagonally the tiles. In
 component $A$, the color is synchronized from the top right corner to the next square at
 the north east. The case $p= -q$ is similar.
 \item for $q=0$ ($\theta=0$), $S$ transmits horizontally, and the colors of component $A$ are synchronized
 with the square on the right. The tiling can only be horizontally periodic if the Turing machine
 accepts it, this is the only way it can be periodic.
 \item for $p=0$ ($\theta=\infty)$, $C$ has, instead of an east deterministic tileset, a north deterministic one.
 Components $S$ and $A$ are modified accordingly. The tiling can only be vertically periodic if
 the Turing machine accepts it and this is the only way it can be periodic.
\qed\end{itemize}

\section{Concluding remarks}
We have shown that the sets of slopes of periodicity of tilings correspond exactly to the 
recursively enumerable ($\Sigma^0_1$) sets of rationals for tilings in dimension 2. Our intuition for
analogous results in higher dimensions would be that the slopes of periodicity would then
be characterized by $\Sigma^0_2$ sets\cite{odifreddi}, since knowing whether a tiling is
periodic of vector $v$ in dimension $3$ is not decidable anymore but only
$\Pi^0_1$. Hence the following conjecture:

\begin{conjecture}
 The sets of slopes of tilings in dimension $d\geq 3$ are exactly the $\Sigma^0_2$ subsets
 of $(\QQ\cup\{\infty\})^{d-1}$.
\end{conjecture}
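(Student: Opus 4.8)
The plan is to prove the conjecture by two inclusions, generalizing the two lemmas of the two-dimensional case, after fixing the meaning of a slope in dimension $d$. For a valid configuration $c\colon\ZZ^d\to T$, its group of periods $P(c)=\{v\in\ZZ^d: \forall x,\ c(x)=c(x+v)\}$ is a subgroup of $\ZZ^d$; I say $c$ \emph{has a slope} when $P(c)$ has rank exactly $1$, and its slope is the direction $\theta\in(\QQ\cup\{\infty\})^{d-1}$ of a primitive generator of $P(c)$, viewed as a point of rational projective space. The set of slopes $S_\tau$ of a $d$-dimensional tiling system is then the set of directions so obtained. The two statements to prove are $S_\tau\in\Sigma^0_2$ for every $\tau$ (upper bound), and that every $\Sigma^0_2$ subset of $(\QQ\cup\{\infty\})^{d-1}$ equals $S_\tau$ for some $\tau$ (realization).

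For the upper bound, the central observation is a quotient reduction. If $v$ is primitive, then $\ZZ^d/\langle v\rangle\cong\ZZ^{d-1}$, and a $v$-periodic configuration of $\tau$ is exactly a configuration of an effectively computable $(d-1)$-dimensional subshift of finite type $\tau_v$ (the neighbourhood constraints of $\tau$, read on the quotient). Moreover a period of the quotient configuration lifts to a period of $c$ modulo $\langle v\rangle$, so $P(c)=\langle v\rangle$ holds \emph{iff} the associated configuration of $\tau_v$ is aperiodic. Hence
\[ \theta\in S_\tau \iff \exists v \text{ of direction } \theta \text{ such that } \tau_v \text{ admits an aperiodic configuration}. \]
I would then isolate the sublemma that, uniformly in the rule, ``a $\ZZ^{k}$-SFT with $k\ge 2$ admits an aperiodic configuration'' is a $\Sigma^0_2$ predicate. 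The clean way to see this is via Post's theorem ($\Sigma^0_2$ = recursively enumerable in $\emptyset'$): with a halting oracle the questions ``$\tau_v$ is nonempty'' ($\Pi^0_1$) and ``$\tau_v$ has a $w$-periodic point'' ($\Sigma^0_1$) become decidable, and one organises a search, over growing pattern sizes, that halts precisely when it has certified a configuration avoiding every period. The outer quantifier $\exists v$ ranges over the (recursively enumerable) set of integer vectors of direction $\theta$, and an existential quantifier in front of a $\Sigma^0_2$ matrix stays in $\Sigma^0_2$; this yields $S_\tau\in\Sigma^0_2$. Note this is genuinely above the two-dimensional bound because there $k=1$, where emptiness and aperiodicity of an SFT are decidable, collapsing the estimate to $\Sigma^0_1$.

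For the realization direction I would adapt the construction of Lemma 2. Writing a $\Sigma^0_2$ set as $\theta\in R\iff\exists m\,\forall n\,S(\theta,m,n)$ with $S$ recursive, I would build a $d$-dimensional system whose configurations with slope (exactly) $\theta$ are forced, by components playing the roles of $C$, $R$, $W$, $S$ and $A$, to carry a grid of hypercubes whose common size is a free parameter encoding the witness $m$, and which are synchronised so that the only surviving direction of periodicity is $\theta$. The decisive new ingredient is that the quotient now has $d-1\ge 2$ free dimensions: inside each hypercube I would run not a halting computation (which only captures the $\Sigma^0_1$ condition used in dimension two) but a two-dimensional space-time verification of the universal statement $\forall n\,S(\theta,m,n)$, designed so that the verification tiles its strip forever exactly when the $\Pi^0_1$ condition holds and produces a forbidden pattern as soon as some $n$ fails. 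The existential $\exists m$ is absorbed, as before, by letting the hypercube size grow without bound so that any valid witness eventually fits.

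The main obstacle I expect lies on both sides but is sharpest in the construction. In the upper bound, the delicate point is the sublemma that ``admits an aperiodic configuration'' is exactly $\Sigma^0_2$ uniformly in $v$; one must rule out that forced periodicity of all configurations hides arbitrarily high arithmetical complexity, which requires a careful compactness and pattern-counting argument rather than the naive oracle search. In the realization, the difficulty is to run a genuinely $\Pi^0_1$ (non-halting) verification inside the synchronised hypercubes while simultaneously guaranteeing that the verification layer creates \emph{no spurious period}: any accidental periodicity would either collapse the configuration to a higher-rank period group (destroying the slope) or admit directions not accepted by the verifier. Controlling these two requirements at once, namely unbounded hypercube size for the $\exists m$ and a genuinely two-dimensional, period-free $\forall n$ check in the quotient, is precisely what forces $d\ge 3$ and is the crux of the proof.
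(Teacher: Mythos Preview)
The statement you are attempting to prove is a \emph{conjecture} in the paper, not a theorem: the paper offers no proof, only the heuristic that in dimension $d\ge 3$ the predicate ``there exists a valid tiling of period $v$'' becomes $\Pi^0_1$ (a $(d-1)$-dimensional domino problem) rather than decidable, which pushes the expected complexity of slopes from $\Sigma^0_1$ to $\Sigma^0_2$. The authors explicitly leave the construction side open, saying only that ``an analogous construction \dots\ should work at least for dimension 3, it would however be tedious.'' There is therefore nothing in the paper to compare your argument against beyond this one-line intuition.

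Your outline is consonant with that intuition and identifies the right two ingredients, but it is a plan, not a proof, and you are candid about this. Two remarks on where the genuine work lies. For the upper bound, the quotient reduction is correct, but your $\Sigma^0_2$ estimate for ``the quotient SFT admits an aperiodic configuration'' is not yet justified: having a halting oracle lets you decide nonemptiness and decide existence of a $w$-periodic point for each $w$, but this does \emph{not} obviously give a halting-oracle semidecision procedure for the existence of an aperiodic point, because aperiodicity is a $G_\delta$ (not closed) condition and a compactness limit of configurations avoiding short periods may well be periodic. One needs a sharper structural argument, typically going through the fact that for SFTs the existence of an aperiodic point is equivalent to the existence of two distinct configurations in the same orbit closure, or to unbounded pattern complexity, and then bounding the complexity of that reformulation. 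For the realization direction, your idea of replacing the halting computation by a $\Pi^0_1$ space-time diagram in the extra dimension is the right move, but the synchronization components $C,R,W,S,A$ do not lift verbatim: ensuring that the period group has rank exactly one (and not two or more) across a $(d-1)$-dimensional quotient, while simultaneously allowing the hypercube size to be unbounded, requires redesigning the aperiodic background and the offset-transmission layers from scratch rather than merely rotating them. Both halves are open in the paper; your sketch is a plausible roadmap but not yet close to a proof.
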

An analogous construction to the one detailed here should work at least for dimension 3,
it would however be tedious.

\bibliographystyle{plain}
\bibliography{biblio,books}{}

\begin{thebibliography}{10}

\bibitem{AubrunS09}
Nathalie Aubrun and Mathieu Sablik.
\newblock An order on sets of tilings corresponding to an order on languages.
\newblock In {\em STACS}, pages 99--110, 2009.

\bibitem{Berger2}
Robert Berger.
\newblock {\em {The Undecidability of the Domino Problem}}.
\newblock Number~66 in Memoirs of the American Mathematical Society. The
  American Mathematical Society, 1966.

\bibitem{buchi}
J.~Richard Buchi.
\newblock {Turing-Machines and the Entscheidungsproblem}.
\newblock {\em Math. Annalen}, 148:201--213, 1962.

\bibitem{Chaitin08}
Gregory Chaitin.
\newblock {The Halting Probability via Wang Tiles}.
\newblock {\em Fundamenta Informaticae}, 86(4):429--433, 2008.

\bibitem{HochMey}
Michael Hochman and Tom Meyerovitch.
\newblock {A characterization of the entropies of multidimensional shifts of
  finite type}.
\newblock {\em Annals of Mathematics}, 2008.

\bibitem{PVEJDLT}
Emmanuel Jeandel and Pascal Vanier.
\newblock {Periodicity in Tilings}.
\newblock In {\em Developments in Language Theory (DLT)}, 2010.

\bibitem{KariNil}
Jarkko Kari.
\newblock {The Nilpotency Problem of One-Dimensional Cellular Automata}.
\newblock {\em SIAM Journal on Computing}, 21(3):571--586, 1992.

\bibitem{KariR}
Jarkko Kari.
\newblock {Recent results on aperiodic Wang tilings}.
\newblock In M.~Gromov P.~Prusinkiewicz {A. Carbone}, editor, {\em Pattern
  formation in biology, vision and dynamics}, pages 83--96. World Scientific,
  Singapore, 2000.

\bibitem{LindMarcus}
Douglas~A. Lind and Brian Marcus.
\newblock {\em An Introduction to Symbolic Dynamics and Coding}.
\newblock Cambridge University Press, New York, NY, USA, 1995.

\bibitem{Meyero}
Tom Meyerovitch.
\newblock {Growth-type invariants for $\ZZ^d$ subshifts of finite type and
  classes arithmetical of real numbers}.
\newblock arXiv:0902.0223v1.

\bibitem{odifreddi}
P.G. Odifreddi.
\newblock {\em {Classical Recursion Theory}}, volume 125 of {\em Studies in
  Logic and the Foundations of Mathematics}.
\newblock Elsevier, 1992.

\bibitem{Robinson}
Raphael~M. Robinson.
\newblock {Undecidability and Nonperiodicity for Tilings of the Plane}.
\newblock {\em Inventiones Math.}, 12, 1971.

\bibitem{Emde}
Peter van Emde~Boas.
\newblock {The convenience of Tilings}.
\newblock In {\em Complexity, Logic, and Recursion Theory}, volume 187 of {\em
  Lecture Notes in Pure and Applied Mathematics}. CRC, 1997.

\bibitem{WangII}
Hao Wang.
\newblock {Proving theorems by Pattern Recognition II}.
\newblock {\em Bell Systems technical journal}, 40:1--41, 1961.

\end{thebibliography}

\end{document}